\documentclass[10pt, conference, letterpaper]{IEEEtran}
\IEEEoverridecommandlockouts
\usepackage{cite}
\usepackage{amsmath,amssymb,amsfonts}
\usepackage{algorithmic}
\usepackage{graphicx}
\usepackage[hidelinks]{hyperref}
\usepackage{textcomp}
\usepackage{xcolor}
\def\BibTeX{{\rm B\kern-.05em{\sc i\kern-.025em b}\kern-.08em
    T\kern-.1667em\lower.7ex\hbox{E}\kern-.125emX}}

\usepackage{amsthm}
\usepackage[small,full]{complexity}
\usepackage[ruled,linesnumbered]{algorithm2e}
\usepackage{stfloats}
\usepackage{multirow}

\newtheorem{theorem}{Theorem}

\newcommand{\defproblem}[3]{
\vspace{2.5mm}
\noindent
\fbox{
  \begin{minipage}{0.935\columnwidth}
  \begin{tabular*}{\columnwidth}{@{\extracolsep{\fill}}lr} #1

  \vspace{1.5mm}
  
  \end{tabular*}
  {\textbf{Input:}} #2 

  \vspace{1.5mm}
  
  {\textbf{Question:}} #3
  \end{minipage}
  }
\vspace{2.5mm}
}

\newcommand{\defoptproblem}[3]{
\vspace{2.5mm}
\noindent
\fbox{
  \begin{minipage}{0.935\columnwidth}
  \begin{tabular*}{\columnwidth}{@{\extracolsep{\fill}}lr} #1 

  \vspace{1.5mm}
  
  \end{tabular*}
  {\textbf{Input:}} #2

  \vspace{1.5mm}
  
  {\textbf{Objective:}} #3
  \end{minipage}
  }
\vspace{2.5mm}
}

\newcommand{\assign}{\textsc{CH-Assign}}
\renewcommand{\SL}{\textsc{SL-Makespan}}
\newcommand{\genSL}{\textsc{GenSL-Makespan}}

\begin{document}

\title{Makespan Minimization in Split Learning: 

From Theory to Practice}

\author{\IEEEauthorblockN{Robert Ganian$^{\star}$, Fionn {Mc~Inerney}$^{\dagger}$, and Dimitra Tsigkari$^{\dagger}$}
\IEEEauthorblockA{$^{\star}$Algorithms and Complexity Group, TU Wien, Vienna, Austria, rganian@gmail.com\\
$^{\dagger}$Telef\'{o}nica Scientific Research, Barcelona, Spain, \{fionn.mcinerney, dimitra.tsigkari\}@telefonica.com}
\thanks{All of the authors contributed equally to this work. It will appear in the proc. of \textbf{IEEE INFOCOM 2026}, and was funded by the Austrian Science Fund (FWF) [10.55776/Y1329 and 10.55776/COE12], the WWTF Vienna Science and Technology Fund (Project 10.47379/ICT22029), the Smart Networks and Services Joint Undertaking
(SNS JU) under the European Union’s Horizon Europe and innovation programme under Grant Agreement No 101139067 (ELASTIC), and the Horizon MSCA Postdoctoral Fellowship OPALS (grant agreement 101210495). Views and opinions expressed are however those of the authors only and do not necessarily reflect those of the EU. Neither the EU nor the granting authority can be held responsible for them. 

\centering \includegraphics[width=2.87cm,trim={0cm 0.53cm 0cm 0cm},clip]{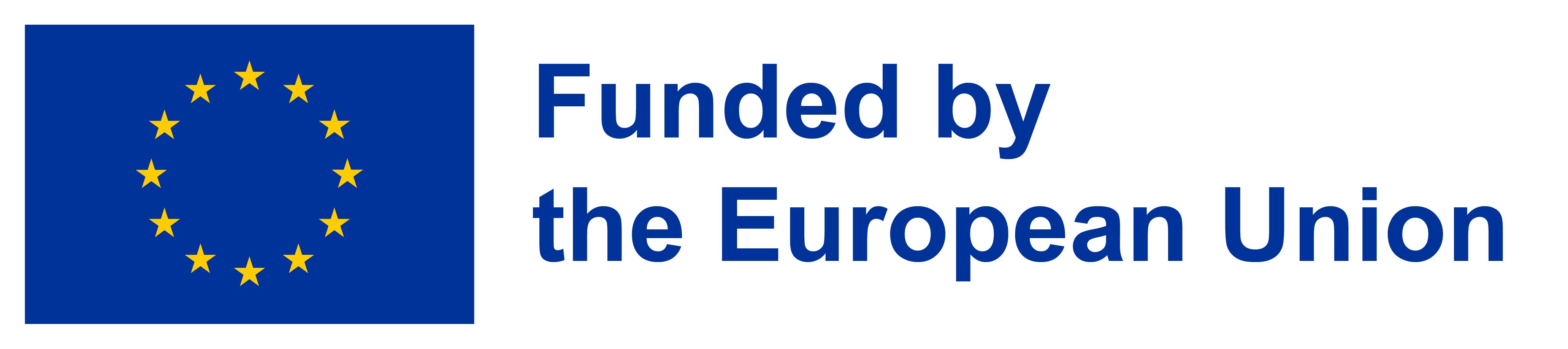}}}

\maketitle

\begin{abstract}
Split learning recently emerged as a solution for distributed machine learning with heterogeneous IoT devices, where clients can offload part of their training to computationally-powerful helpers.
The core challenge in split learning is to minimize the training time by jointly devising the client-helper assignment and the schedule of tasks at the helpers. 
We first study the model where each helper has a memory cardinality constraint on how many clients it may be assigned, which represents the case of homogeneous tasks. Through complexity theory, we rule out exact polynomial-time algorithms and approximation schemes even for highly restricted instances of this problem.
We complement these negative results with a non-trivial polynomial-time 5-approximation algorithm.
Building on this, we then focus on the more general heterogeneous task setting considered by Tirana et al.~[INFOCOM~2024], where helpers have memory capacity constraints and clients have variable memory costs.  
In this case, we prove that, unless $\P=\NP$, the problem cannot admit a polynomial-time approximation algorithm for any approximation factor.
However, by adapting our aforementioned 5-approximation algorithm, we develop a novel heuristic for the heterogeneous task setting and show that it outperforms heuristics from prior works through extensive experiments.
\end{abstract}

%\begin{IEEEkeywords}

%\end{IEEEkeywords}

\section{Introduction}

Split Learning~(SL) is a modern framework which allows devices~(\emph{clients}) with heterogeneous computing resources to collaboratively train a deep Neural Network~(NN) model~\cite{vepakomma2018split}. Relaxing the requirement of Federated Learning~(FL)~\cite{mcmahan2017communication} according to which the training takes place at the clients, SL allows clients to offload part of the training to a \emph{helper}---typically a server in the cloud or a base station in wireless communications. This allows heterogeneous clients which lack the necessary resources to train a model locally to still participate in distributed learning, potentially making use of underutilized network resources while accelerating the training process. For these reasons, the SL paradigm is of growing importance in Internet-of-Things~(IoT) networks and has numerous applications in 6G networks~\cite{samikwa2022ares, lin2024split}. 

\begin{figure}[!t]
	\centering  
	\includegraphics[width=0.9\columnwidth, trim={4.8cm 10cm 8.6cm 1.25cm},clip]{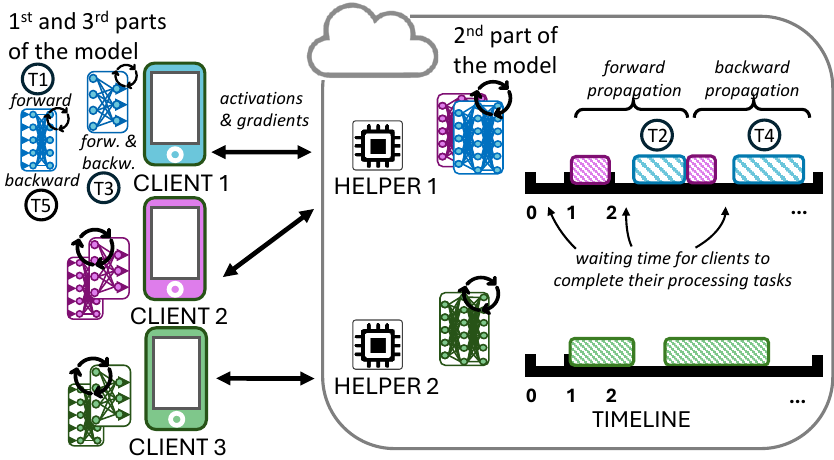} 
    \caption{An example of client-helper assignments and scheduling decisions. Processing tasks 1 to 5 (T1--T5) correspond to the model parts of Client 1.}
	\label{fig:setup}
\end{figure}

Based on the architecture of a NN model that is organized into layers, i.e., blocks of neurons, in SL, the model is  split into 3 parts. These parts are composed of consecutive layers and separated by \emph{cut layers}. Part-1 and part-3 are processed at the clients, while part-2 at one of the helpers. Clients retain data ownership when splitting this way, whereas a 2-part split requires the clients to share the labels of their data with the helper~\cite{vepakomma2018split}. In this work, we consider the SL training framework where the cut layers are predetermined, the clients train their model parts in parallel, and the helpers maintain a distinct copy of part-2 for each assigned client; at the end of each training round, the model parts are aggregated (e.g., through \texttt{FedAvg}~\cite{mcmahan2017communication}) at an aggregator or a helper that assumes this role. This variant is typically called SplitFedV1~\cite{thapa2022splitfed}; crucially, it has been shown to exhibit a similar accuracy to FL~\cite{gao2021evaluation}, and it has been extensively studied in the literature~\cite{wang2023coopfl,  han2024convergence, xia2025multisfl}.

The three model parts need to be processed consecutively during the forward propagation and in the inverse order during backward propagation. Therefore, for each client, the training process is naturally decomposed into \emph{five processing tasks}, as depicted in Fig.~\ref{fig:setup}. These tasks are: forward propagation of part-1 (T1), then part-2 (T2), forward and backward propagation of part-3 (T3), backward propagation of part-2 (T4), and then part-1 (T5).  The handover between the client and helper is achieved through the exchange of activations (during the forward pass) and gradients (during the backward pass). The described bidirectional communication may severely affect the total training time~(makespan), which is the maximum time needed to train the model across all clients. Given that achieving a short training time is crucial in a multitude of applications (including, e.g., Zero-Touch Networks~\cite{benzaid2020ai}), this creates \emph{a need for optimizing the SL operations}---and specifically for optimizing the makespan. This optimization is also important in view of scalability, which is one of the main challenges faced by the SL paradigm~\cite{gao2021evaluation,lyu2023scalable, hafi2024split}.

While minimizing the makespan in the setting of multiple compute nodes  is a classic optimization problem in scheduling and resource allocation, SL poses unique challenges. In particular, the limited capacity constraints at the helpers, the precedence constraints of the processing tasks~(as described above), and the distinct release times of the tasks do not allow us to directly apply known algorithmic results from the general scheduling literature.
The related works that formulated our problem of interest and several variants of it have provided heuristics for makespan minimization~\cite{wang2023coopfl,TTIC25,tirana2024workflow}---see the discussion in Section~\ref{subsec:related}. Apart from the weak \NP-hardness of the problem established in~\cite{TTIC25}, we lack a clear understanding of the computational bounds and approximability of SL makespan minimization. This is the first work that fills this gap, as we elaborate below.

\subsection{Our Contributions}
We first consider the complexity of computing a client-helper assignment and schedule that minimizes the training time---a problem which we denote \SL\ (see Section~\ref{sec:defs} for formal definitions of all the considered problems). As mentioned above, \SL\ is weakly \NP-hard, even if there are only two helpers~\cite{TTIC25}. We adapt the known reduction to establish that even when restricted to instances where each client is adjacent to each helper, the only tasks with non-zero processing times are the T2s, and every helper has the same processing capabilities, \SL\ is (1) strongly \NP-hard even if at most 3 clients may be assigned to each helper and (2) \W\textup{[1]}-hard\footnote{Under the well-established hypothesis that $\FPT\neq \W[1]$ (see, e.g.,~\cite{parambook}), \SL\ with input size $n$ cannot be solved in $f(I)\cdot n^{\mathcal{O}(1)}$ time for any computable function $f$.} parameterized by the number of helpers $I$ (Theorem~\ref{thm:sched-hard}).
Moreover, even if there is only a single helper, it can be observed from a result of~\cite{YHL04} that merely scheduling the tasks to minimize the makespan is strongly \NP-hard even if the processing times of the T1s and T5s are zero and the T2s and T4s have unit processing times (Theorem~\ref{thm:sched-hard2}). 

Given the inherent intractability of finding exact solutions for \SL, we turn to approximation.
Unfortunately, we establish that, unless $\P=\NP$, \SL\ cannot admit a polynomial-time approximation scheme (PTAS). In fact, via a reduction from $\text{R}~||~C_{\text{max}}$ to \SL, we prove an even stronger statement: unless $\P=\NP$, even when restricted to instances where each client is adjacent to each helper and the only tasks with non-zero processing times are the T2s, \SL\ cannot admit a polynomial-time (3/2)-approximation algorithm (Theorem~\ref{thm:inapprox}). Note that the inapproximability for $\text{R}~||~C_{\text{max}}$ cannot be lifted much further as it is well known that it admits a polynomial-time 2-approximation algorithm~\cite{LST90}, and closing this gap is a longstanding open question~\cite{BLMRS24,Svensson12}. 
We show that the same holds true for \SL. That is,
despite the accumulating evidence that \SL\ is hopelessly intractable even for highly restricted instances, as our \textbf{first main theoretical contribution}, we provide the first approximation algorithm for \SL, in particular, a polynomial-time 5-approximation algorithm (Theorem~\ref{thm:approx}). 

Building on this, we then consider the generalization of \SL\ introduced in~\cite{TTIC25,tirana2024workflow}, that we call \genSL, in which each helper has a memory capacity and needs to allocate part of its memory for the activation received from a client, where the amount of memory can differ for each client. Note that \SL\ can be seen as a particular case of \genSL\ where the amount of memory for each client is the same, and thus, can be considered to be of unit cost. As the problem of scheduling at a single helper is already hard  for \SL\ (Theorem~\ref{thm:sched-hard2}), we similarly show that the problem of determining whether there is a feasible client-helper assignment for \genSL\ (denoted \assign) is hard on its own. Specifically, we provide a polynomial-time reduction from $\text{P}~||~\text{C}_{\text{max}}$ to \assign, and thus, obtain hardness results analogous to those of \SL\ (Theorem~\ref{thm:assign-hard}). In particular, as it is strongly \NP-hard to decide if a feasible client-helper assignment exists, as our \textbf{second main theoretical contribution}, we establish that, unless $\P=\NP$, \assign\ cannot admit a polynomial-time approximation algorithm for any approximation factor. 

Nevertheless, as our \textbf{main practical contribution}, we suitably adapt our 5-approximation algorithm for \SL\ to obtain a heuristic for \genSL\ that we call \textbf{EquiDistributed} (\textbf{EquiD} for short). It is based on finding a feasible client-helper assignment by solving an integer program (e.g., via a solver), while the scheduling part is the same as in our 5-approximation algorithm.
In Section~\ref{sec:eval}, through experimental results\footnote{The code is available at https://github.com/ddd-ttt99/SL-makespan.} based on open-source and synthetic data, we then show that EquiD outperforms the other heuristics from the literature, and finds near-optimal solutions in a fraction of the time it takes a solver to find the optimal solution.

\subsection{Related Work} \label{subsec:related}
SL was first introduced in~\cite{vepakomma2018split} (and \cite{gupta2018distributed} in the same year) as a distributed learning method where the computational requirements for the participating devices are lower when compared to FL. Several variants of the framework have been considered since then, with the most prominent ones being the SplitFed~\cite{thapa2022splitfed} variants. Apart from works focusing on minimizing the makespan in SL (on which we elaborate below),  related work studies problems related to privacy and security~\cite{pasquini2021unleashing, thapa2022splitfed, zhang2023privacy}, convergence under heterogeneous data~\cite{han2024convergence, liao2024mergesfl, xia2025multisfl, tirana2025data},  cut layer selection~\cite{wang2021hivemind, lin2025hierarchical}, and energy-efficiency~\cite{samikwa2022ares, liu2022energy, kim2023bargaining} in the setting of SL. Finally, some works in the literature focus on a ``hybrid'' setting where clients may participate in the training either through FL or SL, e.g.,~\cite{liu2022wireless, liu2022novel, TTIC25}.

Minimizing the training makespan of SL has been identified as one of the main challenges of SL~\cite{hafi2024split}. The work in \cite{samikwa2022ares} formulates the problem of jointly minimizing the makespan and the energy consumption. However, it assumes that each helper serves a single client, which vastly simplifies client-helper assignments and eliminates the need for scheduling. The work in~\cite{wang2023coopfl} studies the problem of makespan minimization with a single helper and proposes a method that decides on the cut layer and the schedule at the helper. On the other hand, the authors in~\cite{tirana2024workflow} formulate the problem of makespan minimization in the presence of multiple helpers and propose two heuristics. The first one is based on a decomposition of the problem into subproblems, and the second one  finds the client-helper assignments through a load-balancing  algorithm, and employs a first-come-first-serve scheduling policy. 
Finally, there are works that minimize the training time by considering decisions like cut layer selection, radio spectrum allocation~(in wireless networks), and device clustering, e.g.,~\cite{wu2023split, lin2024efficient, huang2024decentralized}.

\section{Problem Formulations}\label{sec:defs}

We consider a system with a set $\mathcal{J}=\{1,\ldots,J\}$ of clients (devices) and a set $\mathcal{I}=\{1,\ldots,I\}$ of helpers~(located at the cloud or the network edge), which are connected  with  links~$\mathcal{E}$. These links may reflect compute configurations at the cloud or location-based policies at the edge. This naturally defines a bipartite graph $G=(\mathcal{J}, \mathcal{I}, \mathcal{E})$. 
While helpers usually have the same compute/memory capacity in related work, e.g.,~\cite{samikwa2022ares}, in the second part of the paper we follow the framework proposed in more recent works and
consider 
the generalized problem where the helpers are processors of various capabilities~(e.g., CPUs or GPUs). 
This generalization allows to capture realistic scenarios of high heterogeneity~\cite{weng2022mlaas}. In particular,  the memory capacity of helper $i\in \mathcal{I}$ is denoted by $M_i$ and it reflects its physical memory~(e.g., measured in Mbytes or Gbytes). Finally, the computing capacities of helpers can be described either by cycles per second or by the estimated time of the computing task (after profiling), as described below.

\subsection{The Steps of SL Training} 
In order to minimize the training makespan, we focus on minimizing the training time of a single batch of data for each client, as in previous work~\cite{wang2023coopfl, tirana2024workflow,TTIC25}. We note that in applications, the training process is naturally decomposed into multiple consecutive batch training processes---where the number of batches depends on how the clients' data is partitioned, the number of epochs and training rounds, etc.

As is common in the scheduling literature, we employ a time-slotted model with time intervals $S_t$, where $S_t=[t, t+1]$ for all integers $t\geq 0$ and the time slots are measured in, e.g.,~msec. For each batch of  data (of equal size) of client $j\in \mathcal{J}$, the following tasks take place:
\begin{itemize}
\item[\textbf{T1:}] Client $j$ performs forward propagation of part-1 and communicates the activations of the (predetermined) cut layer to the cloud in $r_j$ time slots.
\item[\textbf{T2:}] The assigned helper $i\in \mathcal{I}$ performs forward propagation of part-2 in $p_{ij}$ time slots. Note that this time depends on the processing capabilities of helper $i$, as well as the size of part-2 for client $j$ (which may vary for each client).
\item[\textbf{T3:}] Client $j$ receives the activations of part-2 from helper~$i$ and performs forward propagation of part-3. Client~$j$ then computes the loss function, starts the backward propagation of part-3~(i.e., computes gradients and model weights), 
and transmits the gradients to the cloud. These operations last $\ell_j$ time slots.
\item[\textbf{T4:}] Helper $i$ performs backward propagation of part-2 in $p_{ij}^\prime$ time slots.
\item[\textbf{T5:}] Client $j$ receives the gradients of part-2 and back-propagates them into part-1 in $r_j^\prime$ time slots. This completes the training of a batch.
\end{itemize}

The helper allocates part of its memory for the activations/gradients received from each assigned client as well as for the model weights of part-2 of each client. We denote by~$d_j$ the memory footprint of client~$j$ at the assigned helper. We distinguish two different cases of how the memory capacity constraints are modeled in the makespan minimization problem. First, these constraints are considered to be cardinality constraints, i.e., determining the maximum number of clients that can be served at each helper. 
Given that each helper has a limited memory capacity or other communication constraints, helper $i$ can serve up to $M_i$ clients during a batch update, for all $i\in \mathcal{I}$, which is equivalent to assuming $d_j=1$ for all $j\in \mathcal{J}$. The assignments of clients to helpers should respect this \emph{basic servicing constraint} that captures homogeneous scenarios in terms of processing tasks~(i.e., the same memory footprint for each client).
Later, in Section~\ref{sec:genSL}, we also consider a \emph{comprehensive servicing constraint} which allows clients to have distinct memory demands (i.e., distinct values of~$d_j$), in line with previous work~\cite{TTIC25,tirana2024workflow}. We assume that the aforementioned memory capacities $M_i$ and times $r_j$, $p_{ij}$, $\ell_j$, $p_{ij}'$, and $r_j'$ are all non-negative integers.

Let a client-helper assignment $Y: \mathcal{J} \rightarrow \mathcal{I}$ be a mapping that assigns clients to helpers and let $Z_Y(i):=\{j\mid Y(j)=i\}$ for all $i\in \mathcal{I}$.
A client-helper assignment $Y$ is \emph{feasible} if it complies with (a) the adjacency constraints of $G$, i.e., for all $j\in \mathcal{J}$, $(j,Y(j))\in \mathcal{E}$, and (b) the servicing constraints of the helpers, i.e., for all $i\in \mathcal{I}$, $\sum_{j\in Z_Y(i)} d_j \leq M_i$. 
We remark that the client-helper assignment cannot change midway through processing the client's batch~\cite{wang2023coopfl, tirana2024workflow}.

\subsection{Training Makespan of a Batch} 

In basic scheduling problems, a schedule is an assignment of jobs $\mathcal{J}$ to time intervals of machines $\mathcal{I}$ such that each job is assigned to one machine and no two jobs are assigned to the same time interval of the same machine (i.e., single-threaded processing). For each job $j\in \mathcal{J}$ assigned to machine $i\in \mathcal{I}$, the number of time intervals it needs to be assigned to is equal to its processing time $p_j$ in the case of identical parallel machines or $p_{ij}$ in the case of unrelated parallel machines. If preemption is allowed, then the assigned time intervals for a job do not have to be consecutive. If no graph and no machine capacity constraints are specified (such as in the definitions of $P~||~C_{\text{max}}$ and $R~||~C_{\text{max}}$ in Section~\ref{sec:SLhard}), then any job may be assigned to any machine. The makespan of a schedule is the time the last job is completed. 

Our setting is analogous to the one described above, except that the jobs are now the T2s and T4s of the clients, the machines are the helpers, there are graph and helper memory cardinality/capacity constraints, and preemption is allowed. We highlight that we do not make use of preemption in any of our results, but we allow it in order to be inline with the model of~\cite{TTIC25,tirana2024workflow}. Only the T2s and T4s of the clients must be assigned to time intervals of the helpers; note that the T2 and T4 of the same client must be assigned to the same helper and the T4 cannot be scheduled before the T2 is completed. Furthermore, there are the following additional constraints. Given a client $j$: (a)~the release date of its T2 is equal to $r_j$, which is the time needed to complete its T1 (i.e., T2 cannot start being processed before time~$r_j$) and (b)~there is a delay of $\ell_j$ time slots (the time needed to complete its T3) between the completion of its T2 and the time that its T4 may start. The completion time of the batch for client $j$, denoted by~$c_j$, is the completion time of its T5 which occurs~$r_j'$~time slots after the completion of its~T4. These requirements derive from the precedence relations of the tasks within the SL workflow.

We emphasize that since clients are only responsible for part-1 and part-3 of their local model (i.e., Tasks 1, 3, and~5), they process them as soon as they are available: Task 1 at time $0$, Task 3 after Task 2 is completed at the assigned helper, and Task 5 once Task 4 is completed. In principle, a natural assumption in distributed learning is that  devices are idle during training~\cite{google_fl}. Therefore, no schedules need to be defined for the tasks being processed at the clients.

The makespan of a batch is determined by the maximum training time among all clients. Therefore, our goal is to minimize the \emph{makespan} $\max_{j\in\mathcal{J}} c_j$.

We now have the necessary definitions and notation to formally define our considered problems. As is common when proving formal computational hardness results, we consider the decision variants of the problems at hand. We remark, however, that all of our algorithms apply to the optimization versions\footnote{We abuse notation and use the same name for both the decision and optimization versions of the considered problems.} and are constructive, i.e., they can output a feasible corresponding schedule along with the makespan that they compute. In the case of memory cardinality constraints for the helpers, we study the following problem.

\defproblem{\textsc{\SL}}{A bipartite graph $G=(\mathcal{J}, \mathcal{I}, \mathcal{E})$, the memory capacities $M_1, \ldots, M_I$ of the helpers, the memory demands $d_1=\dots=d_J=1$ of the clients, the times $r_j$, $p_{ij}$, $\ell_j$, $p_{ij}^\prime$, and $r_j^\prime$, for all $i\in \mathcal{I}$, and $j\in\mathcal{J}$, and a positive integer $k$.}
{Is there a feasible client-helper assignment and schedule with makespan at most~$k$?}

In the more general case of memory capacity constraints---which capture scenarios of high heterogeneity---we consider the following generalization of \SL\ introduced and studied in~\cite{TTIC25,tirana2024workflow}.

\defproblem{\textsc{Generalized SL Makespan} (\genSL)}{A bipartite graph $G=(\mathcal{J}, \mathcal{I}, \mathcal{E})$, the memory capacities $M_1, \ldots, M_I$ of the helpers, the memory demands $d_1, \ldots, d_J$ of the clients, the times $r_j$, $p_{ij}$, $\ell_j$, $p_{ij}^\prime$, and $r_j^\prime$, for all $i\in \mathcal{I}$, and $j\in\mathcal{J}$, and a positive integer $k$.}
{Is there a feasible client-helper assignment and schedule with makespan at most $k$?}

Both of the optimization versions of \SL\ and \genSL\ have the same input as above, but instead ask for a feasible client-helper assignment and schedule with minimum makespan.
In the case of \genSL, to establish that merely determining whether a feasible client-helper assignment exists is strongly \NP-hard, we define the following restriction of \genSL.

\defproblem{\textsc{Client-Helper Assignment} (\assign)}{A bipartite graph $G=(\mathcal{J}, \mathcal{I},\mathcal{E})$, the memory capacities $M_1, \ldots, M_I$ of the helpers, and the memory demands $d_1,\ldots,d_J$ of the clients.}
{Is there a feasible client-helper assignment?}

\section{\SL: Hardness and Approximability}\label{sec:SLhard}

In this section, we study the computational complexity of solving \SL\ exactly and approximately. Some of our formal computational hardness results are established via polynomial-time reductions from two very well-known scheduling problems in which the goal is to minimize the makespan: (a) the problem of scheduling on identical parallel machines, denoted by $P~||~C_{\text{max}}$, and (b) the problem of scheduling on unrelated parallel machines, denoted by $R~||~C_{\text{max}}$. These two problems are defined as follows.

\defproblem{$P~||~C_{\text{max}}$}
{A set of jobs $\mathcal{J}$, a set of identical parallel machines $\mathcal{I}$, a non-negative integer job processing time $p_j$ for all $j\in \mathcal{J}$, and a positive integer $k$.}
{Is there a schedule with makespan at most $k$?}

\defproblem{$R~||~C_{\text{max}}$}
{A set of jobs $\mathcal{J}$, a set of unrelated parallel machines $\mathcal{I}$, a non-negative integer job processing time $p_{ij}$ for all $i\in \mathcal{I}$ and $j\in \mathcal{J}$, and a positive integer $k$.}
{Is there a schedule with makespan at most $k$?}

We first extend the weak \NP-hardness result in the case of two helpers~\cite{TTIC25} by proving that severely restricted instances of \SL\ remain hard.

\begin{theorem}\label{thm:sched-hard}    
    Even when restricted to instances such that $G$ is a complete bipartite graph and $r_j=\ell_j=p_{ij}'=r_j'=0$ and $p_{ij}=p_{i'j}$ for all $i,i'\in \mathcal{I}$ and $j\in \mathcal{J}$, \SL\ is \textup{(1)} strongly \NP-hard even if $M_1=\dots=M_I=3$ and \textup{(2)} \W[1]-hard parameterized by $I$.
\end{theorem}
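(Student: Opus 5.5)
Under the stated restrictions (complete bipartite $G$, $r_j=\ell_j=p'_{ij}=r'_j=0$, and $p_{ij}=p_j$ independent of the helper), \SL\ becomes identical-machine scheduling with a cardinality bound on each machine. Indeed, with zero release times and zero delays, each helper processes its assigned T2s back-to-back from time $0$; the T4s and T5s take no time. So for a fixed assignment $Y$, the optimal makespan is $\max_i \sum_{j\in Z_Y(i)} p_j$, and preemption gives no advantage. The instance therefore asks: can the jobs be partitioned among $I$ identical machines, with at most $M_i$ jobs per machine, so that every machine load is at most $k$? I will reduce to this problem from known hard problems, checking in each case that the reduction is polynomial and that the numbers stay polynomially bounded where needed.

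For (1), I reduce from \textsc{3-Partition}, which is strongly \NP-hard. An instance consists of $3m$ integers $a_1,\dots,a_{3m}$ with $B/4<a_j<B/2$ and $\sum_j a_j=mB$. I create one client per integer with $p_{ij}=a_j$, together with $m$ helpers, each with $M_i=3$, and set $k=B$. A 3-partition gives an assignment with every load equal to $B$. Conversely, if a schedule has makespan at most $B$, then every load is at most $B$ while the total load is $mB$, so every load equals exactly $B$. The size bounds on the $a_j$ then force exactly three clients per helper, which is in any case also guaranteed by $M_i=3$ together with $3m$ clients. Because all numbers are unchanged, strong \NP-hardness is preserved.

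For (2), I reduce from \textsc{Unary Bin Packing} parameterized by the number of bins, which is \W[1]-hard (Jansen, Kratsch, Marx, Schlotter). Items with sizes $s_j$ in unary become clients with $p_{ij}=s_j$. Each of the $b$ bins becomes a helper with $M_i=J$, so the cardinality constraint never binds, and I set $k$ equal to the bin capacity. The parameter $I=b$ is preserved, and the reduction runs in polynomial time. Equivalence follows directly from the load characterization above.

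The main obstacle is justifying carefully that the schedule reduces exactly to machine loads. I must confirm that T4s with zero processing time can be placed immediately, that the clients' T3s and T5s add no time, and hence that $c_j$ equals the completion time of client $j$'s T2. Once this is verified, the remaining steps are routine.
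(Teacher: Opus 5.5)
Your proposal is correct and follows essentially the same route as the paper: under the stated restrictions, \SL\ collapses to $P~||~C_{\text{max}}$ with a per-helper cardinality bound, and both hardness results are inherited from \textsc{3-Partition} and from the W[1]-hardness of unary bin packing parameterized by the number of bins~\cite{JKMS13}. Your write-up is more explicit than the paper's, checking that the zero-length T3--T5 add no time, that $M_i=3$ with $3m$ clients forces exactly three clients per helper, and that $M_i=J$ makes the cardinality constraint non-binding in part (2).
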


\begin{proof}
    If $G$ is a complete bipartite graph and $r_j=\ell_j=p_{ij}'=r_j'=0$ and $p_{ij}=p_{i'j}$ for all $i,i'\in \mathcal{I}$ and $j\in \mathcal{J}$, then \SL\ is equivalent to $P~||~C_{\text{max}}$, where the clients and helpers in \SL\ correspond to the jobs and machines, respectively, in $P~||~C_{\text{max}}$. The statement of the theorem then holds since $P~||~C_{\text{max}}$ is known to be (1) strongly \NP-hard even if exactly 3 jobs must be assigned to each machine (this particular case of $P~||~C_{\text{max}}$ is known as the {\sc 3-Partition} problem)~\cite{GJ79} and (2) \W[1]-hard parameterized by the number of machines~\cite{JKMS13,KK18}.
\end{proof}

The hardness from the previous theorem stems from deciding the client-helper assignment. However, if this assignment is enforced (or when there is a single helper), the scheduling is also computationally intractable---even when highly restricted. In fact, this is a direct result of some restricted instances of \SL\ being equivalent to a known scheduling problem from the literature.

\begin{theorem}{\textup{\cite[Thm. 24]{YHL04}}}\label{thm:sched-hard2}
    Even when restricted to instances such that $I=1$, $G$ is a complete bipartite graph, and $r_j=r'_j=0$ and $p_{1j}=p_{1j}'=1$ for all $j\in \mathcal{J}$, \SL\ is strongly \NP-hard.
\end{theorem}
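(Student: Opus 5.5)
The plan is to show that the restricted instances of \SL\ in the statement are exactly the single-machine scheduling problem whose strong \NP-hardness is proven in \cite[Thm.~24]{YHL04}. That problem is the coupled-task problem: each job consists of two unit-time operations separated by an exact or minimum delay. Given this equivalence, the theorem follows by a direct polynomial-time identification of instances rather than by a new reduction.

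First, I will spell out what an instance of \SL\ looks like under the restrictions $I=1$, $G$ complete, and $r_j=r_j'=0$, $p_{1j}=p_{1j}'=1$. The memory bound is irrelevant because all clients must go to the single helper, so I will set $M_1=J$. The assignment is then forced. What remains is a schedule on one machine in which each client $j$ has two unit tasks, its T2 and its T4, both released at time $0$. The T4 may start only after at least $\ell_j$ time units have elapsed since the T2 completes. The completion time is $c_j$, the end of the T4, since $r_j'=0$, and the objective is $\max_j c_j$. Since all processing times are unit and all data are integral, preemption plays no role: a unit task occupies exactly one slot $S_t$. Hence any feasible schedule is simply an assignment of the $2J$ unit operations to distinct integer slots respecting the minimum delays.

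Second, I will match this with the problem of \cite{YHL04}, which I take to be single-machine scheduling of chains of two unit operations with minimum time lags $\ell_j$ between them, with the objective of minimizing makespan ($1\mid \text{chains}(\ell_j), p_j=1\mid C_{\max}$ in their notation). Its strong \NP-hardness holds with the lags encoded in unary, since the reduction there is from \textsc{3-Partition}. The map sends job $j$ with lag $\ell_j$ to client $j$ with the same $\ell_j$, keeps the makespan bound $k$ unchanged, and is clearly polynomial. Schedules correspond one-to-one, with the same makespan. Strong hardness therefore transfers, because all numbers are preserved.

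The main obstacle is making sure the delay semantics agree. If \cite{YHL04} uses \emph{exact} delays (coupled tasks), then a minimum-delay \SL\ schedule need not be a valid schedule there, and I would need to check that their Theorem~24 concerns minimum lags. My understanding is that it does, since it is stated for chains with time lags. If it did not, I would fall back on a padding argument: take their tight \textsc{3-Partition}-based instance, in which any schedule meeting the bound $k$ has no idle slots, and argue that such a tight schedule forces every delay to be met exactly.
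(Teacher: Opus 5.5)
Your proposal is correct and is essentially the paper's argument: the paper gives no separate proof, but simply observes that these restricted instances of \SL\ coincide with the single-machine problem of jobs consisting of two unit-time operations separated by minimum delays $\ell_j$ from \cite[Thm.~24]{YHL04}, and inherits strong \NP-hardness from it. Your fallback is not needed, and as stated it would not work anyway: a schedule with no idle slots does not force delays to be exact, for example a client with $\ell_j=0$ whose T2 and T4 are separated by other clients' tasks.
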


Since solving \SL\ exactly is inherently intractable, a natural next step is to instead aim for approximate solutions. Unfortunately, we establish that it is also hard to solve \SL\ approximately, even in very restricted instances. In particular, \SL\ cannot admit a PTAS unless $\P=\NP$. In fact, an even stronger statement holds:

\begin{theorem}\label{thm:inapprox}
    Unless $\P=\NP$, even when restricted to instances such that $G$ is a complete bipartite graph and $r_j=\ell_j=p_{ij}'=r_j'=0$ for all $i\in I$ and $j\in J$, \SL\ cannot admit a polynomial-time (3/2)-approximation algorithm.
\end{theorem}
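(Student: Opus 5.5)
The plan is a gap-preserving reduction from $R~||~C_{\text{max}}$, combined with the classical hardness result of Lenstra, Shmoys and Tardos~\cite{LST90}. That result says it is \NP-hard to decide whether an instance of $R~||~C_{\text{max}}$ with integer processing times has makespan at most $2$. Moreover, in their construction from \textsc{3-Dimensional Matching}, every yes-instance has makespan exactly $2$ and every no-instance has makespan at least $3$.

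\textbf{The reduction.} Given an instance of $R~||~C_{\text{max}}$ with jobs $\mathcal{J}$, machines $\mathcal{I}$ and times $p_{ij}$, I build an instance of \SL\ as follows.
\begin{itemize}
\item There is one client per job and one helper per machine.
\item $G$ is the complete bipartite graph.
\item $r_j=\ell_j=p_{ij}'=r_j'=0$, and the T2 time of client $j$ on helper $i$ is $p_{ij}$.
\item $M_i=|\mathcal{J}|$ for every helper, so the servicing constraint is vacuous.
\end{itemize}
This is exactly the restricted class in the statement.

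\textbf{Makespan equivalence.} I will show that for every feasible assignment $Y$ the optimal makespan equals $\max_i \sum_{j\in Z_Y(i)} p_{ij}$. First, no client can finish before its helper has processed all of its assigned T2s, so the makespan is at least this maximum load. Second, because all release times, delays and T4/T5 times are zero, processing the assigned T2s back to back at time $0$ on each helper attains this load: each client completes when its T2 completes, its zero-length T4 follows immediately, and no preemption is needed. Hence the optimal makespans of the two instances coincide, and any schedule for the \SL\ instance converts in polynomial time into a schedule for $R~||~C_{\text{max}}$ that is no worse.

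\textbf{Main obstacle: the boundary factor $3/2$.} The naive gap argument only excludes factors strictly below $3/2$. A $3/2$-approximation run on a yes-instance of value $2$ may return value $3$, which is indistinguishable from a no-instance. To exclude exactly $3/2$, I need a construction in which yes-instances have value $v$ and no-instances have value strictly greater than $\tfrac32 v$.

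\textbf{First attempt.} I would first try to use the freedom of the reduction, namely non-uniform $p_{ij}$ and the cardinalities $M_i$, to sharpen the LST gadget. The modifications are:
\begin{itemize}
\item set $p_{ij}=K$ for a huge $K$ (say $K>2\sum_{i,j}p_{ij}$ over the compatible pairs) whenever job $j$ is not allowed on machine $i$ in the restricted-assignment gadget;
\item choose the $M_i$ so that the total cardinality equals the number of clients.
\end{itemize}
With these choices every helper must receive a full complement of clients. The hope is that a no-instance can then only avoid $K$ by placing a forbidden combination, such as a dummy job together with an element job, on one helper. The parameters would be tuned so that any such combination has load at least $\tfrac32 v+1$. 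Checking that this counting argument truly rules out every non-matching configuration is the step I expect to be hardest.

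\textbf{Fallback.} If the cardinality trick cannot force this, I would check whether the proof of~\cite{LST90} already yields the strict gap (for instance, from a variant of their reduction with value $v$ against at least $\tfrac32 v+1$). Otherwise I would need a different source problem with such a strict gap. Either way, this last step is currently only a plan: I have not verified that the proposed choice of $K$ and $M_i$, or the LST construction as stated, excludes exactly the factor $3/2$ rather than only factors below it.
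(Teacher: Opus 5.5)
Your reduction and makespan-equivalence argument are exactly the paper's proof. The paper observes that for complete $G$ and $r_j=\ell_j=p'_{ij}=r'_j=0$, \SL\ coincides with $R~||~C_{\text{max}}$, and then invokes~\cite{LST90}; it does nothing more. Your check that every assignment $Y$ attains makespan $\max_i\sum_{j\in Z_Y(i)}p_{ij}$, and that no schedule does better, is what makes the correspondence approximation-preserving. Your explicit choice $M_i=J$ supplies a detail the paper leaves implicit.

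The boundary issue you raise is genuine, but the paper's proof does not overcome it either. The paper reads~\cite{LST90} as excluding a $(3/2)$-approximation. However, the LST gadget separates value $2$ from value at least $3$, so it only excludes ratios strictly below $3/2$; the theorem in~\cite{LST90} is stated for every $\rho<3/2$, and it has no strict-gap variant. So what this route establishes, for the paper as for you, is that no polynomial-time algorithm achieves ratio $\rho<3/2$ unless $\P=\NP$. Your first three steps already prove that completely. I would state the result in that form and drop the sharpening plan, which as described cannot succeed, for two reasons.
\begin{itemize}
\item \emph{The capacities would have to encode the matching.} Fixed capacities with $\sum_i M_i=J$ force each helper to take exactly $M_i$ clients. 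But in the yes-schedule a helper takes either two element jobs or one dummy, according to whether its triple lies in the matching. So the $M_i$ would have to encode the very matching you are trying to detect.
\item \emph{The target configuration cannot be penalised enough.} Suppose you pad with zero-time clients to get around the first problem. The configuration you want to penalise, a dummy together with an element job, has load $d+u\le\tfrac32\max\{2u,d\}$ for every element time $u$ and dummy time $d$. So no tuning of the times pushes it to $\tfrac32 v+1$.
\end{itemize}
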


\begin{proof}
         If $G$ is a complete bipartite graph and $r_j=\ell_j=p_{ij}'=r_j'=0$ for all $i\in I$ and $j\in J$, then \SL\ is equivalent to $R~||~C_{\text{max}}$, where the clients and helpers in \SL\ correspond to the jobs and machines, respectively, in $R~||~C_{\text{max}}$. The statement of the theorem then holds since $R~||~C_{\text{max}}$ cannot admit a polynomial-time $(3/2)$-approximation algorithm unless $\P=\NP$~\cite{LST90}.
\end{proof}

\begin{algorithm*}[t]
\NoCaptionOfAlgo

\SetAlgoLined
\SetKwInOut{Input}{Input}
\SetKw{Return}{Return}
\SetAlgoLined

\Input{$G=(\mathcal{J}, \mathcal{I}, \mathcal{E})$, $M_1, \ldots, M_I$, $r_j$, $p_{ij}$, $\ell_j$, $p_{ij}'$, and $r_j'$, for all $i\in \mathcal{I}$ and $j\in\mathcal{J}$.}
Apply the polynomial-time 2-approximation algorithm of~\cite[Thm.~2.3]{SS18} for {\sc GAPcc} using $G$, $M_i$, and $p^{\star}_{ij}:=p_{ij}+p_{ij}'$ for all $i\in \mathcal{I}$ and $j\in \mathcal{J}$ as input. This returns a feasible client-helper assignment $Y: \mathcal{J} \rightarrow \mathcal{I}$.

\For(\tcp*[f]{\small Initialize time  each $T4$ can be processed}){$j=1:J$}
  {$w_j:=\infty$}

 \For(\tcp*[f]{\small Schedule clients in $Z_Y(i)$ for each helper $i$}){$i=1:I$}
 {Sort clients in $Z_Y(i)$ in decreasing order w.r.t.~$\ell_j$. Let $Q$ be the ordered set.

 Sort clients in $Z_Y(i)$ in decreasing order w.r.t.~$r_j'$. Let $Q'$ be the ordered set.

 $t:=0$ \tcp*[f]{\small Initialize time}

 \While(\tcp*[f]{\small Schedule $T2$s and $T4$s of clients in $Z_Y(i)$}){$|Q|\neq 0$ OR $|Q'|\neq 0$}
 {$t=\max \{t,\min_{j\in Q,j'\in Q'}\{r_j,w_{j'}\}$\} \tcp*[f]{\small Adapt $t$ if no $T2$ or $T4$ can be processed}

 \If{$|Q|\neq 0$ AND $t\geq \min_{j\in Q}r_j$}
 {Assign $T2$ of the client $j$ with the smallest index in $Q$ such that $r_j\leq t$ to time intervals $S_t$ through $S_{t+p_{ij}}$ on helper $i$.

 $Q=Q\setminus \{j\}$ \tcp*[f]{\small Update $Q$ to remove $T2$ of client $j$}

 $t=t+p_{ij}$ \tcp*[f]{\small Update $t$ to time that $T2$ of client $j$ finishes}
 
 $w_{j}=t+\ell_j$ \tcp*[f]{\small Set time that $T4$ of client $j$ can be processed}
 }
 \Else
 {Assign $T4$ of the client $j$ with the smallest index in $Q'$ such that $w_j\leq t$ to time intervals $S_t$ through $S_{t+p_{ij}'}$ on helper $i$.

 $Q'=Q'\setminus \{j\}$ \tcp*[f]{\small Update $Q'$ to remove $T4$ of client $j$}

 $t=t+p'_{ij}$ \tcp*[f]{\small Update $t$ to time that $T4$ of client $j$ finishes}

 $c_j:=t+r'_j$ \tcp*[f]{\small Set completion time of batch of client $j$}
 }
 }
 }
 \Return{$\max_{j\in\mathcal{J}} c_j$}
 \caption{\textbf{Algorithm~1}: $5$-approximation algorithm for \SL} \label{alg:approx}
\end{algorithm*}

Surprisingly, despite all of these negative results holding even for severely restricted instances of \SL, we manage to overcome this intractability by designing a polynomial-time 5-approximation algorithm for \SL.
As a subroutine in our 5-approximation algorithm (see line~1 in Algorithm~1), we employ the polynomial-time 2-approximation algorithm from~\cite{SS18} for the Generalized Assignment Problem with capacity constraints and without cost constraints ({\sc GAPcc} for short), which is formally defined below. We note that the polynomial-time 2-approximation algorithm from~\cite{SS18} considers {\sc GAPcc} with additional cost constraints, but these constraints are not needed for Algorithm~1, so we omit them from the problem definition.

\defoptproblem{{\sc GAPcc}}
{A bipartite graph $G=(\mathcal{J},\mathcal{I},\mathcal{E})$ composed of jobs~$\mathcal{J}$ and unrelated parallel machines~$\mathcal{I}$, positive integers $M_1,\ldots,M_{|I|}$, and a non-negative integer job processing time $p_{ij}^{\star}$ for all $i\in \mathcal{I}$ and $j\in \mathcal{J}$.}
{A schedule of minimum makespan where at most $M_i$ jobs are assigned to machine $i$ for all $i\in \mathcal{I}$ and the adjacency constraints of $G$ are respected, i.e., $j\in \mathcal{J}$ can be assigned to $i\in \mathcal{I}$ if and only if $(j,i)\in \mathcal{E}$.}

It is worth highlighting that the aforementioned polynomial-time 2-approximation algorithm for {\sc GAPcc} solves a fractional relaxation of the problem through linear programming, and then performs iterative rounding to obtain an integer solution. We can now establish our first main theoretical contribution.

\begin{theorem}\label{thm:approx}
     There exists a polynomial-time 5-approximation algorithm for \SL.
\end{theorem}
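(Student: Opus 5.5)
The plan is to show that Algorithm~1 runs in polynomial time and returns a makespan at most $5\cdot\mathrm{OPT}$, where $\mathrm{OPT}$ is the optimal makespan of the \SL\ instance. Polynomial running time is immediate: line~1 is the polynomial-time algorithm of~\cite{SS18}, and the scheduling loop processes each T2 and T4 once after sorting. For the approximation ratio, I would compare against three lower bounds on $\mathrm{OPT}$.

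\textbf{(a) Per-client bound.} For every client $j$ and every helper $i$ that $j$ can use in an optimal solution,
\[
\mathrm{OPT}\ge r_j+\ell_j+r_j'+p_{ij}+p'_{ij}.
\]
In particular, $\mathrm{OPT}\ge \max_j r_j$, $\mathrm{OPT}\ge\max_j \ell_j$, and $\mathrm{OPT}\ge \max_j r'_j$.

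\textbf{(b) Load bound.} Let $L^\star$ be the optimum of the {\sc GAPcc} instance with processing times $p^\star_{ij}=p_{ij}+p'_{ij}$. The assignment of any feasible \SL\ solution is feasible for {\sc GAPcc}. The helper's total busy time is at most the makespan, so $\mathrm{OPT}\ge L^\star$.

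\textbf{(c) Bounds on the returned assignment.} The assignment $Y$ from~\cite[Thm.~2.3]{SS18} satisfies the cardinality and adjacency constraints. I need two facts about it:
\begin{itemize}
\item Its total load is at most $2L^\star$, giving $\sum_{j\in Z_Y(i)}p^\star_{ij}\le 2\,\mathrm{OPT}$.
\item Each pair $(i,j)$ used satisfies $p^\star_{ij}\le \mathrm{OPT}$.
\end{itemize}
The second fact holds if the LP is solved with guess $T$, with variables for pairs satisfying $p^\star_{ij}>T$ removed, and the rounding is the standard Shmoys--Tardos style that loses only one extra job's processing time. I would state this as the property of~\cite{SS18} being used.

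The core step is to analyze the greedy schedule on a fixed helper $i$ and show that every $c_j\le 5\,\mathrm{OPT}$. The plan is a standard idle-time argument. After time $R:=\max_{j\in Z_Y(i)} r_j\le \mathrm{OPT}$, every T2 is released. From then on the helper is idle only when no T2 remains and no T4 is ready. Such an idle period ends once some pending T4 becomes ready, which happens within $\max_j\ell_j\le\mathrm{OPT}$ of the completion of the last T2.

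This suggests splitting the timeline for helper $i$ into four pieces:
\begin{enumerate}
\item before $R$: at most $\mathrm{OPT}$;
\item busy time: at most $2\,\mathrm{OPT}$ by (c);
\item idle waiting for a T3 after all T2s finish: at most $\mathrm{OPT}$;
\item the final T5 of the last client: at most $\max r'_j\le\mathrm{OPT}$.
\end{enumerate}
The total is $5\,\mathrm{OPT}$.

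The main obstacle is the third piece. The greedy rule prioritizes T2s, so after time $R$ the helper is never idle while some T2 remains. Once all T2s are done, consider the last idle gap before the final T4 completes. It starts after the last T2 finished, at time $\tau$. Every T4 pending at that point becomes ready by $\tau+\max\ell_j$. After that, the helper is continuously busy with T4s until it finishes, so there is at most one such gap, of length at most $\mathrm{OPT}$.

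One must also check that gaps before $R$ are all absorbed into the $R$ term. Any idle time after $R$ but before the last T2 is impossible, because all T2s are released by $R$ and are prioritized. Summing the four pieces gives completion of the last T4 at most $R+2\,\mathrm{OPT}+\mathrm{OPT}\le 4\,\mathrm{OPT}$, and adding $r'_j\le\mathrm{OPT}$ yields $c_j\le 5\,\mathrm{OPT}$.

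The orderings by $\ell_j$ and $r'_j$ are not needed for this bound. If my reading of the rounding guarantee of~\cite{SS18} is wrong, I would fall back on guessing $T$ and pruning edges with $p^\star_{ij}>T$ before invoking it.
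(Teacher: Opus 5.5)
Your proposal is correct and follows essentially the same argument as the paper: the busy time on each helper equals the {\sc GAPcc} load of the assignment returned by~\cite{SS18}, hence at most $2\,\mathrm{OPT}$, and the idle time is charged to $\max_j r_j$ (before all T2s are released), $\max_j \ell_j$ (after the last T2 completes), plus a tail of $\max_j r'_j$ for the T5s, each at most $\mathrm{OPT}$. Two cosmetic fixes: your second claimed property of~\cite{SS18} ($p^\star_{ij}\le\mathrm{OPT}$ on used pairs) is never used and can be dropped, and ``at most one such gap'' is false in general but should simply read ``the total idle time after the last T2 is at most $\max_j\ell_j$'', since all such idle time lies in $[\tau,\tau+\max_j\ell_j]$.
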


\begin{proof}
    We prove that Algorithm~1 is a polynomial-time 5-approximation algorithm for  \SL, and begin by proving its 5-approximation factor. We first prove that, when $r_j=\ell_j=r'_j=0$ for all $j\in \mathcal{J}$, the schedule produced by Algorithm~1 has at most the makespan of the schedule produced by the 2-approximation algorithm of~\cite{SS18}. 
    For an appropriate input, consider a schedule $S$ given by the 2-approximation algorithm of~\cite{SS18}. Consider a machine $i$ that, by $S$, is idle from time $t$ to time $t+x$ (where $t\geq 0$ and $x\geq 1$) and processes a job~$j$ from time $t+x$ to time $t+x+p_{ij}$, where $p_{ij}\geq 1$. Then, $S$ has the same makespan as the schedule $S'$ obtained from $S$ by instead processing job $j$ from time $t$ to time $t+p_{ij}$ on machine $i$ and having machine $i$ idle from time $t+p_{ij}$ to time $t+x+p_{ij}$. Hence, any schedule $S'$ will have a makespan that is at most that of $S$ if it (1) assigns the same jobs to the same machines as $S$, (2) processes them in the same order as~$S$, and (3) ensures that each machine is not idle until it has processed every job assigned to it. 
    
    Now, consider a machine $i$ that, by such a schedule $S'$, processes a job $j$ from time $t$ to time $t+p_{ij}$, and a job $j'$ from time $t+p_{ij}$ to time $t+p_{ij}+p_{ij'}$, where $t\geq 0$ and $p_{ij},p_{ij'}\geq 1$. Then, $S'$ has the same makespan as the schedule $S''$ obtained from $S'$ by instead processing job $j'$ from time $t$ to time $t+p_{ij'}$, and job $j$ from time $t+p_{ij'}$ to time $t+p_{ij'}+p_{ij}$ on machine $i$. Hence, any schedule $S''$ will have a makespan that is at most that of $S$ if it (1) assigns the same jobs to the same machines as $S$ and (2) ensures that each machine is not idle until it has processed every job assigned to it. Note that $S''$ can process the jobs at those machines in any order.   

    Let $\iota^{\star}$ be an instance of \SL. Consider an instance $\iota$ of \SL\ that is the same as $\iota^{\star}$ except that $r_j=\ell_j=r'_j=0$ for all $j\in \mathcal{J}$. Let OPT be the optimal value of the makespan for $\iota$. Due to the above, the schedule provided by Algorithm~1 for $\iota$ has a makespan of $k$ such that $k\leq 2\cdot \text{OPT}$ since it uses a feasible client-helper assignment $Y: \mathcal{J} \rightarrow \mathcal{I}$ provided by the 2-approximation algorithm of~\cite{SS18}. Indeed, the same arguments as in the paragraph above also prove that the tasks can be processed in any order that satisfies the precedence relations without increasing the makespan. 
     
    Let $\text{OPT}^{\star}$ be the optimal value of the makespan for~$\iota^{\star}$. Then, $\text{OPT}^{\star}\geq \text{OPT}$. Let $k^{\star}$ be the makespan given by Algorithm~1 for $\iota^{\star}$. Then, $k^{\star}\leq k+\max_{j\in \mathcal{J}}r_j+\max_{j\in \mathcal{J}}\ell_j+\max_{j\in \mathcal{J}}r'_j$, where the values of $\max_{j\in \mathcal{J}}r_j$, $\max_{j\in \mathcal{J}}\ell_j$, and $\max_{j\in \mathcal{J}}r'_j$ are with respect to $\iota^{\star}$. Indeed, in Algorithm~1, the maximum total amount of time a helper $i$ may remain idle while there is still a $T2$ to be processed on $i$ is at most $\max_{j\in \mathcal{J}}r_j$. Moreover, the maximum total amount of time a helper $i$ may remain idle while there is still a $T4$ to be processed on $i$ but no $T2$s to be processed on $i$ is at most $\max_{j\in\mathcal{J}}\ell_j$. Lastly, after the last $T4$ is processed on a helper~$i$, the maximum amount of time needed for the $T5$s of all the clients assigned to $i$ to be completed is at most $\max_{j\in \mathcal{J}}r'_j$.
    
    Observe that the following three inequalities trivially hold for $\iota^{\star}$: (a)~$\text{OPT}^{\star}\geq \max_{j\in \mathcal{J}}r_j$, (b)~$\text{OPT}^{\star}\geq \max_{j\in \mathcal{J}}\ell_j$, and (c)~$\text{OPT}^{\star}\geq \max_{j\in \mathcal{J}}r'_j$, where the values of $\max_{j\in \mathcal{J}}r_j$, $\max_{j\in \mathcal{J}}\ell_j$, and $\max_{j\in \mathcal{J}}r'_j$ are also with respect to $\iota^{\star}$. 
    Hence, $k^{\star}\leq k + 3\cdot \text{OPT}^{\star}\leq 2\cdot \text{OPT} + 3\cdot \text{OPT}^{\star}\leq 5\cdot \text{OPT}^{\star}$,~and~so, Algorithm~1 is a 5-approximation algorithm for \SL.

    Finally, we establish the desired running time bound for Algorithm~1. First, note that the 2-approximation algorithm of~\cite{SS18} runs in polynomial time, but using current methods requires at least $\mathcal{O}(n^{2+\varepsilon})$ time (for a constant $\varepsilon>0$ and input size $n$) due to the use of linear programming as a subroutine~\cite{JSWZ21}.    
    For the remaining steps in Algorithm~1, note that each of the FOR loops and the WHILE loop are iterated through at most $I$ or $J$ times. Thus, the overall running time of these loops is asymptotically upper-bounded by the time needed to perform the sorting steps described on lines~6-7. As these occur in the aforementioned FOR loop, the time required to perform all steps after line~1 can be upper-bounded by $\mathcal{O}(n^2\cdot \log n)$. Therefore, the overall asymptotic running time bound of the proposed algorithm matches the time required to invoke the known scheduling algorithm on line~1~\cite{SS18}.
\end{proof}

\section{\genSL: Hardness and Heuristic}\label{sec:genSL}

In this section, we study the computational complexity of the more general \genSL, and design a heuristic for it. Similarly to the fact that simply scheduling tasks at a single helper is strongly \NP-hard even for highly restricted instances (Theorem~\ref{thm:sched-hard2}), as our second main theoretical contribution, we prove that merely finding a feasible client-helper assignment for \genSL\ (formalized as the \assign\ problem) is strongly \NP-hard even for severely restricted instances. Hence, unless $\P=\NP$, \genSL\ cannot admit a polynomial-time approximation algorithm for any approximation factor.

\begin{theorem}\label{thm:assign-hard}
    Even when restricted to instances where $G$ is a complete bipartite graph, \assign\ is \textup{(1)} weakly \NP-hard even if $I=2$, \textup{(2)} \W\textup{[1]}-hard parameterized by $I$, and \textup{(3)} strongly \NP-hard even if $M_i<d_j+d_{j'}+d_{j''}+d_{j'''}$ for all $i\in \mathcal{I}$ and distinct $j,j',j'',j'''\in \mathcal{J}$ (i.e., at most 3 clients may be assigned to each helper).
\end{theorem}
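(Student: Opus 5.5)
We prove all three statements with one polynomial-time reduction from $P~||~C_{\text{max}}$ to \assign. The hardness results for $P~||~C_{\text{max}}$ then carry over, just as in the proof of Theorem~\ref{thm:sched-hard}. Given an instance of $P~||~C_{\text{max}}$ with jobs $\mathcal{J}$, machines $\mathcal{I}$, processing times $p_j$ and bound $k$, we build the \assign\ instance as follows:
\begin{itemize}
\item there is one client for each job, with $d_j:=p_j$;
\item there is one helper for each machine, with $M_i:=k$;
\item $G$ is the complete bipartite graph between the two sets.
\end{itemize}
A client-helper assignment $Y$ is feasible exactly when $\sum_{j\in Z_Y(i)} p_j\le k$ for every $i$. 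This is exactly the condition that the corresponding schedule assigning job $j$ to machine $Y(j)$ has makespan at most $k$. The order of jobs on a machine is irrelevant in $P~||~C_{\text{max}}$, since without release times each machine can process its jobs back-to-back. So the two instances are equivalent. The reduction is clearly polynomial, keeps the numbers unchanged (no blow-up in magnitude), and preserves the number of machines/helpers.

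The three items then follow from the corresponding results for $P~||~C_{\text{max}}$.
\begin{itemize}
\item For (1): $P~||~C_{\text{max}}$ with two machines is weakly \NP-hard, since it is exactly \textsc{Partition}~\cite{GJ79}. Our reduction gives $I=2$.
\item For (2): $P~||~C_{\text{max}}$ is \W[1]-hard parameterized by the number of machines~\cite{JKMS13,KK18}. Our reduction keeps $I$ equal to that number, so it is a parameterized reduction.
\item For (3): we reduce from \textsc{3-Partition}~\cite{GJ79}, which is strongly \NP-hard. Here there are $3I$ numbers with total $I\cdot k$, and each number lies strictly between $k/4$ and $k/2$. Since each $d_j>k/4$, any four distinct clients satisfy $d_j+d_{j'}+d_{j''}+d_{j'''}>k=M_i$, which is exactly the stated restriction. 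Strong hardness is preserved because $d_j=p_j$ and $M_i=k$ are bounded by the original (unary-encoded) values.
\end{itemize}

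The main point needing care is (3). We must confirm that the strict bounds $k/4<p_j<k/2$ may be assumed in the standard strongly \NP-hard version of \textsc{3-Partition}. They may, as that is the classical formulation in~\cite{GJ79}. We also need the positivity constraints on $M_i$ and $d_j$ to be respected. Jobs with $p_j=0$ can be dropped beforehand, or the values kept as is if zero memory demands are allowed, since all values are non-negative integers.

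Finally, we obtain the inapproximability consequence for \genSL. Suppose there were a polynomial-time approximation algorithm for \genSL\ with any factor. On every instance, it would output a feasible assignment exactly when one exists, and so it would decide \assign. This is impossible unless $\P=\NP$.
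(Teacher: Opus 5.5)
Your proposal is correct and follows the paper's proof: the same reduction from $P~||~C_{\text{max}}$ (one client per job with $d_j=p_j$, one helper per machine with $M_i=k$, complete bipartite $G$), with items (1)--(3) inherited from \textsc{Partition}, the \W[1]-hardness of $P~||~C_{\text{max}}$ parameterized by the number of machines, and \textsc{3-Partition} with $k/4<p_j<k/2$. Your extra remarks on nonnegative values and on the inapproximability consequence for \genSL\ are sound and match what the paper states around the theorem.
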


\begin{proof}
    We prove the theorem via a polynomial-time reduction from $P~||~C_{\text{max}}$ to \assign. Given an instance $(\mathcal{J}^{\star},\mathcal{I}^{\star},k)$ of $P~||~C_{\text{max}}$, we construct an instance $G=(\mathcal{J},\mathcal{I},\mathcal{E})$ of \assign\ as follows. For all $i^{\star}\in \mathcal{I}^{\star}$, there is a helper $i\in \mathcal{I}$ with $M_i=k$. For all $j^{\star}\in \mathcal{J}^{\star}$, there is a client $j\in \mathcal{J}$ with $d_j=p_{j^{\star}}$. Finally, each client in $\mathcal{J}$ is adjacent to each helper in $\mathcal{I}$, i.e., $G$ is a complete bipartite graph. This completes the reduction, which clearly takes polynomial time.
    
    First, suppose that there is a schedule $S$ of makespan at most~$k$ for the instance $(\mathcal{J}^{\star},\mathcal{I}^{\star},k)$ of $P~||~C_{\text{max}}$. For all $j^{\star}\in \mathcal{J}^{\star}$, if job $j^{\star}$ is assigned to machine $i^{\star}\in \mathcal{I}^{\star}$ by $S$, then assign client $j$ to helper $i$. Let $Y: \mathcal{J} \rightarrow \mathcal{I}$ be the corresponding client-helper assignment. Since $G$ is a complete bipartite graph, we have that $Y$ respects the adjacency constraints of~$G$. As $S$ has makespan at most $k$ and the memory demands of the clients correspond to the processing times of the jobs (i.e., $d_j=p_{j^{\star}}$ for all $j\in \mathcal{J}$), then $\sum_{j\in Z_Y(i)}d_j\leq k=M_i$ for all $i\in \mathcal{I}$. Hence, $Y$ also respects the memory constraints, and thus, $Y$ is a feasible client-helper assignment.

    For the reverse direction, suppose there is a feasible client-helper assignment $Y: \mathcal{J} \rightarrow \mathcal{I}$ for the instance $G=(\mathcal{J},\mathcal{I},\mathcal{E})$ of \assign. For all $i\in \mathcal{I}$ and $j\in Z_Y(i)$, assign job $j^{\star}$ to machine $i^{\star}$, and let $Y^{\star}: \mathcal{J}^{\star} \rightarrow \mathcal{I}^{\star}$ be the corresponding assignment. For all $i^{\star}\in \mathcal{I}^{\star}$, schedule the jobs in $Z_{Y^{\star}}(i^{\star})$ on machine $i^{\star}$ in any order, maintaining that machine $i^{\star}$ is not idle until it has processed every job in $Z_{Y^{\star}}(i^{\star})$. Let~$S$ denote the corresponding schedule. 
    As~$Y$ is a feasible client-helper assignment, then $\sum_{j\in Z_Y(i)}d_j\leq M_i=k$ for all $i\in \mathcal{I}$.
    As the memory costs of the clients correspond to the processing times of the jobs (i.e., $d_j=p_{j^{\star}}$ for all $j\in \mathcal{J}$), and $S$ ensures that the machines are not idle while they have not processed every job assigned to them, then $\sum_{j^{\star}\in Z_{Y^{\star}}(i^{\star})}p_{j^{\star}}\leq M_i=k$ for all $i^{\star}\in \mathcal{I}^{\star}$. Thus, $S$ is a schedule of makespan at most $k$.
    
    The theorem statement then holds as $P~||~C_{\text{max}}$ is (1) weakly \NP-hard even if $\mathcal{I}^{\star}=2$~\cite{GJ79}, (2) \W[1]-hard parameterized by the number of machines~\cite{JKMS13,KK18}, and (3) strongly \NP-hard even if $k/4<p_{j^{\star}}<k/2$ for all $j^{\star}\in \mathcal{J}^{\star}$ (this particular case of $P~||~C_{\text{max}}$ is known as the {\sc 3-Partition} problem)~\cite{GJ79}.
\end{proof}

Due to this negative result, the best that can be hoped for is a heuristic. Building on our approximation algorithm for \SL, we design a heuristic for \genSL, where the only difference between the two algorithms is in the first step where the client-helper assignment is computed (i.e., line~1 in Algorithm~1). This step must be adapted as it is strongly \NP-hard to even compute a feasible client-helper assignment for \genSL\ (Theorem~\ref{thm:assign-hard}), i.e., it cannot be done in polynomial time unless $\P=\NP$.
Thus, for the heuristic, we instead use a solver to find a feasible client-helper assignment that minimizes the maximum sum of the T2s and T4s assigned to a helper over all helpers, i.e., that finds a client-helper assignment $Y$ that is a solution to the problem $\min_Y{\max_{i\in \mathcal{I}}{\sum_{j\in Z_Y(i)}p_{ij}+p'_{ij}}}$ subject to $Y$ being feasible. In other words, our heuristic equally distributes the sums of the T2s and T4s of the clients assigned to the helpers, and hence, we naturally call it \textbf{EquiDistributed} (\textbf{EquiD} for short). In the next section, the experimental evaluations of EquiD show that this equal distribution
is the key to obtaining an efficient heuristic that also performs well.

\section{Numerical Evaluations}\label{sec:eval}

In this section, we evaluate the performance of EquiDistributed (EquiD) for \genSL, and compare it with baselines and methods from the literature. Our experiments show that EquiD often yields near-optimal outputs and runs in a fraction of the time a solver takes to give an optimal solution. We begin with the experimental setup before moving on to the insights gained from the empirical analysis.

\subsection{Experimental Setup}
\noindent \textbf{Dataset \& Heterogeneity Levels.} For our evaluations, we~use open-source data of time measurements and memory requirements per layer and per batch from~\cite{joana_github}. This data concerns the training of ResNet101~\cite{he2016deep} and VGG19~\cite{simonyan2014very} with two different datasets~(CIFAR-10~\cite{krizhevsky2009learning} and MNIST~\cite{lecun2002gradient}) by 6 different heterogeneous devices: laptop, virtual machine, RPi3, RPi4, Jetson GPU, and Jetson CPU~(for MNIST, measurements are available for only the 4 first devices). The first two serve as helpers and the rest are clients.  The data in~\cite{joana_github} has similar characteristics to other time measurements from SL testbeds in the literature, e.g.,~\cite{wang2023coopfl, hafi2024impact}, in terms of disparity of processing times across layers and across forward/backward propagation.

We create problem instances from the data described above, but we also create synthetic data. The instances have four different levels of heterogeneity. Level~1 considers only 2 different types of client devices with the same cut layers, and the 2 types of helper devices. As the cut layers and type of device determine the lengths of tasks T1 through~T5, this is a rather homogeneous scenario. Level~2 considers all the different client devices with the same cut layers, and 2 types of helper devices. Level 3 is the same as level 2, except that the cut layers are selected randomly~(where the first cut layer is in the first few layers of the NN and the second cut layer is in the last few). Level 4 is similar to level~3, but uses synthetic data for measurements: they are selected randomly, but in the range of the actual data. In levels 1-3, the memory demands and capacities are based on the data, while for level 4 they are selected uniformly at random but lie within the existing range. Unless otherwise specified, the speed of the connection between the devices and the cloud~(helpers), which together with the memory demands determine the values of $r_j, \ell_j$, and $r_j'$, follows the statistics in Akamai's Internet Report~\cite{belson2017q4}. 

\noindent \textbf{Metrics \& Comparison with Other Methods.} We are mainly interested in the training makespan metric (for a batch), but we also analyze the execution times of our algorithm EquiD.\footnote{We stress here that metrics like accuracy are out of the scope of this work since the assignments and scheduling decisions do not affect the convergence of the model. Future work could explore how client dropouts or client selection policies could impact the makespan or the model's convergence.} In~particular, we compare the performance of EquiD (in terms of the computed makespan) to the following methods:
\begin{itemize}
    \item \textbf{B-G~(balanced-greedy)}: this is the method proposed in~\cite{tirana2024workflow}. For each client, it finds the helper with the smallest number of assigned clients and enough available memory, and assigns it to this helper. Then, the schedule follows the first-come-first-serve~(FCFS) policy.
    \item \textbf{ED-FCFS}: this is a baseline bridging EquiD and B-G. It solves the assignment problem in the same way as EquiD, but then it schedules the clients in the FCFS order.
\end{itemize}
Finally, a comparison with other methods in the literature~(e.g., those mentioned in Subsection~\ref{subsec:related}) is not possible since these jointly decide on other aspects of the general SL framework such as radio spectrum allocation or device clustering.

\subsection{Insights}
First, we compare EquiD with the optimal solution of \genSL. Table~\ref{table:optimal} reports the suboptimality for various instance sizes~(in terms of $J$ and $I$) and heterogeneity levels for ResNet101 and CIFAR10. Due to the time it takes the solver to find an optimal solution, only small instances of the problem can be solved optimally.\footnote{We use the mathematical model from~\cite{tirana2024workflow} with a time slot of 300ms.} For this reason, the execution times are also provided. Here, the optimal's execution time is the one achieved by the solver Gurobi~\cite{gurobi}, while we observed mostly similar times with the SCIP solver~\cite{bolusani2024scip}. Both solvers were run on an off-the-shelf laptop. As the heterogeneity increases, the solver needs up to $683$ seconds to find the optimal solution. Overall, the following observation can be extracted from the data in Table~\ref{table:optimal}.

\smallskip

\noindent \emph{Our algorithm EquiD typically achieves a suboptimality of at most $7.79\%$ (and at most $19.77\%$ in the worst case) and runs in a fraction of the time a solver needs to solve \genSL\ optimally.}

\smallskip

\begin{figure*}[!t]
\centering  
\includegraphics[width=18cm, trim={0.4cm 0.4cm 0.2cm 0.1cm},clip]{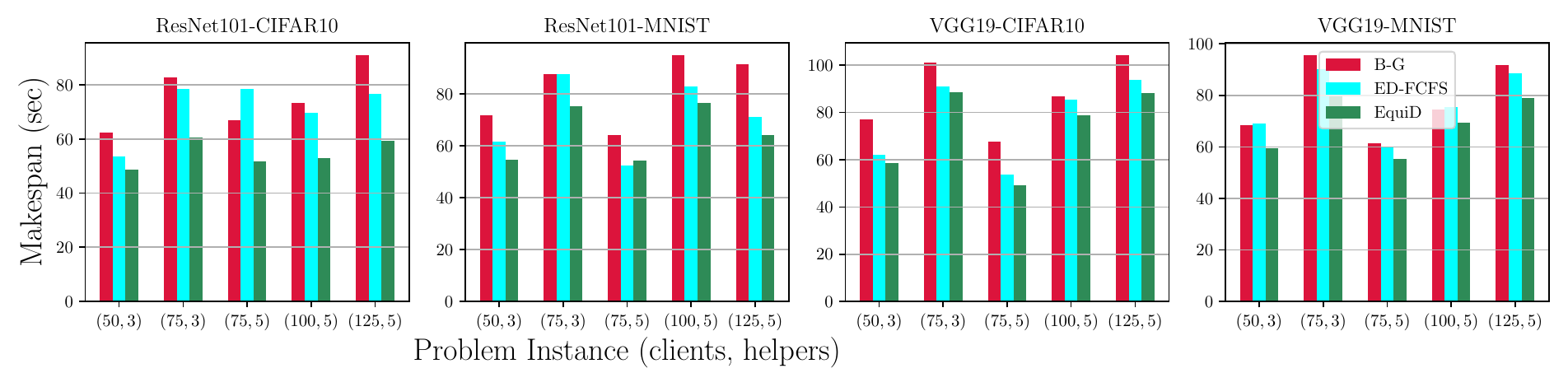} 
\caption{Batch makespan (in sec) for different problem instances achieved by our algorithm EquiD, the baseline ED-FCFS, and B-G, the algorithm from~\cite{tirana2024workflow}. Note that our algorithm EquiD computes a smaller makespan than the other two methods in every scenario except one (ResNet101-MNIST with 75 clients and 5 helpers), where only ED-FCFS computes a slightly smaller makespan.}  
\label{fig:makespan}
\end{figure*}

\begin{table}
\begin{center}
\caption{Comparison of the proposed policy EquiD with the optimal solution in terms of suboptimality and execution time.} \label{table:optimal}
\begin{tabular}{ |c|c|c|c|c|c| } 
\hline
\multirow{1.5}{*}{Heterogeneity} & \multirow{1.5}{*}{$J$} & \multirow{1.5}{*}{$I$} & \multirow{1.5}{*}{Suboptimality} & \multirow{1.5}{*}{Optimal's ex.}  & \multirow{1.5}{*}{EquiD ex.} \\[3pt] level & & & (\%) & time (sec) & time (sec)
\\[2pt]
\hline
 \multirow{9}{*}{2} & \multirow{1.5}{*}{8} & \multirow{1.5}{*}{2} & \multirow{1.5}{*}{6.15} & \multirow{1.5}{*}{63.52} & \multirow{1.5}{*}{0.02}
\\[3pt] \cline{2-6} & \multirow{1.5}{*}{10} & \multirow{1.5}{*}{2} & \multirow{1.5}{*}{0.00} & \multirow{1.5}{*}{42.82} & \multirow{1.5}{*}{0.02} 
\\[3pt] \cline{2-6} & \multirow{1.5}{*}{10} & \multirow{1.5}{*}{5} & \multirow{1.5}{*}{1.67} & \multirow{1.5}{*}{140.07} & \multirow{1.5}{*}{0.03}
\\[3pt] \cline{2-6} & \multirow{1.5}{*}{12} & \multirow{1.5}{*}{2} & \multirow{1.5}{*}{7.79} & \multirow{1.5}{*}{94.07} & \multirow{1.5}{*}{0.02}
\\[3pt] \cline{2-6} & \multirow{1.5}{*}{15} & \multirow{1.5}{*}{2} & \multirow{1.5}{*}{19.77} & \multirow{1.5}{*}{146.65} & \multirow{1.5}{*}{0.02}
\\[3pt] \cline{2-6} & \multirow{1.5}{*}{15} & \multirow{1.5}{*}{5} & \multirow{1.5}{*}{2.78} & \multirow{1.5}{*}{449.84} & \multirow{1.5}{*}{0.02}
\\[3pt]
\hline
 \multirow{9}{*}{3} & \multirow{1.5}{*}{8} & \multirow{1.5}{*}{2} & \multirow{1.5}{*}{2.10} & \multirow{1.5}{*}{71.21} & \multirow{1.5}{*}{0.01}
\\[3pt] \cline{2-6} & \multirow{1.5}{*}{10} & \multirow{1.5}{*}{2} & \multirow{1.5}{*}{0.56} & \multirow{1.5}{*}{60.11} & \multirow{1.5}{*}{0.01} 
\\[3pt] \cline{2-6} & \multirow{1.5}{*}{10} & \multirow{1.5}{*}{5} & \multirow{1.5}{*}{2.86} & \multirow{1.5}{*}{136.54} & \multirow{1.5}{*}{0.03}
\\[3pt] \cline{2-6} & \multirow{1.5}{*}{12} & \multirow{1.5}{*}{2} & \multirow{1.5}{*}{1.94} & \multirow{1.5}{*}{156.77} & \multirow{1.5}{*}{0.02}
\\[3pt] \cline{2-6} & \multirow{1.5}{*}{15} & \multirow{1.5}{*}{2} & \multirow{1.5}{*}{0.00} & \multirow{1.5}{*}{241.97} & \multirow{1.5}{*}{0.01}
\\[3pt] \cline{2-6} & \multirow{1.5}{*}{15} & \multirow{1.5}{*}{5} & \multirow{1.5}{*}{2.50} & \multirow{1.5}{*}{683.34} & \multirow{1.5}{*}{0.02}
\\[3pt]
\hline
\end{tabular}
\end{center}
\end{table}

Next, we focus on how EquiD performs compared to other methods. Figure~\ref{fig:makespan} depicts the makespan of EquiD, B-G, and ED-FCFS for different problem instances.

\smallskip

\noindent \emph{EquiD consistently outperforms the methods B-G and ED-FCFS by achieving a shorter makespan by up to $34.6\%$~(in the case of ResNet101 and CIFAR10).}

\smallskip

In most cases, ED-FCFS performs better than B-G, but worse than EquiD. This highlights how different client-helper assignments may affect the makespan, and how important it is that these assignments are decided while taking into account the processing times of tasks $T2$ and $T4$. Further, the performance gap between ED-FCFS and EquiD derives from the different scheduling policies. An FCFS schedule that is oblivious to the lengths of T3 and T5 might not prioritize a slow client~(straggler), and thus, lead to a larger makespan.

In Figure~\ref{fig:makespan}, we see that
the performance gaps are smaller in the case of VGG19 when compared to ResNet101. This is because  the size of activations/gradients that need to be exchanged when training VGG19 are on average larger  than the ones of ResNet101. Thus, depending on the cut layers, communication delays may be large and some clients may be very slow, making the makespan largely independent of the optimization decisions. For this reason, for the experiments carried out for VGG19 in Figure~2, the clients' connectivities were selected to be within the fastest connectivity range.

Next, we further compare EquiD with B-G while focusing on the impact of task heterogeneity and the number of clients and helpers. Figure~\ref{fig:relative} reports how much larger the makespan given by B-G is relative to the makespan given by EquiD for different heterogeneity levels for ResNet101 and CIFAR10. The relative difference increases as the number of clients and helpers increases~(i.e., when comparing the pink and gray bars). This is due to the fact that, in the presence of a larger number of clients, assignments and scheduling decisions need to prioritize the slowest devices. That is precisely what EquiD does by scheduling not only based on the lengths of the T2s and T4s, but also the T3s and T5s (lines 6-7 in Algorithm~1).

Looking now at the heterogeneity levels in Figure~\ref{fig:relative}, while for low levels the relative difference can be as low as $25\%$, it drastically increases in highly heterogeneous scenarios.

\smallskip

\noindent \emph{In highly heterogeneous scenarios, B-G gives a $70.4\%$ to $117.8\%$ larger makespan than the one given by EquiD.}

\smallskip

To complete the comparison of EquiD with B-G, it is crucial to note that EquiD is guaranteed to provide a solution if one exists, i.e., a schedule adhering to the constraints of \genSL. However, this is not true for B-G since it can fail to find a feasible client-helper assignment. A simple example demonstrating this is the case of two helpers with memory capacities $1$ and $2$, and two clients with memory demands $1$ and $2$. While it is clear that a feasible client-helper assignment exists in this case, the B-G policy may assign the client with memory demand $1$ to the helper with memory capacity $2$, in which case the other client cannot be assigned to any helper.

Finally, in Figure~\ref{fig:helpers}, we study the performance of EquiD as the number of clients and helpers vary under high heterogeneity~(i.e., level 4) for ResNet101 and MNIST. As expected, the makespan increases as the number of clients increases. This is also the case when the number of helpers decreases. An exception to this observation occurs in the cases where there are 50 and 75 clients. In these cases, the makespan is determined by some slow devices, and although they are assigned to a fast helper, their training time is not affected by the addition of helpers. This is also the reason why, for 5 helpers, the makespan remains stable when there are at least 75 clients. As our final observation, we mention the following. 

\smallskip

\noindent \emph{As the number of helpers increases, the makespan decreases by up to $21.8\%$. This phenomenon becomes more prominent when the number of clients is large.}

\begin{figure}[!t]
\centering  
\includegraphics[width=0.886\columnwidth, trim={0cm 0.33cm 0.2cm 0.3cm},clip]{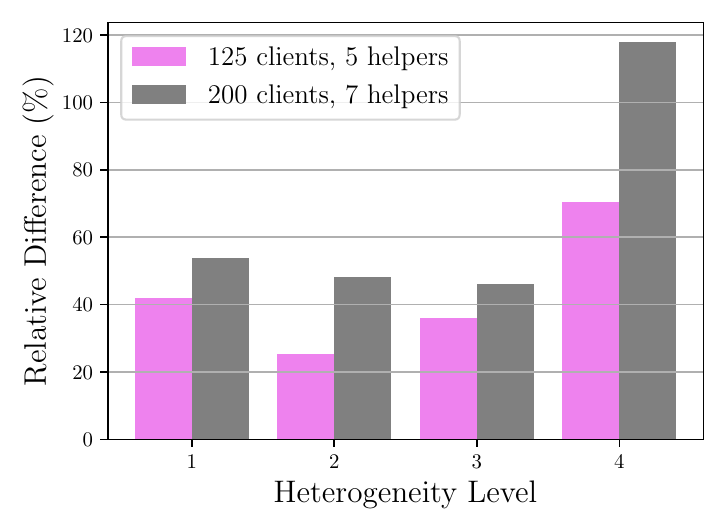} 
\caption{Relative difference in terms of how much larger the makespan given by B-G is than the makespan given by EquiD for different levels of heterogeneity. Here, the time measurements of ResNet101 and CIFAR10 were used.}  
\label{fig:relative}
\end{figure}

\begin{figure}[!t]
\centering  
\includegraphics[width=0.886\columnwidth, trim={0.2cm 0.33cm 0.2cm 0cm},clip]{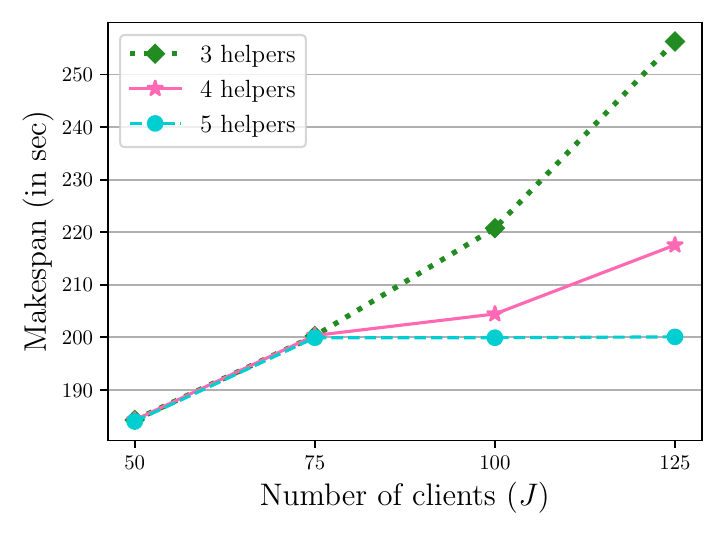} 
\caption{Makespan achieved by EquiD as the number of clients and helpers varies (for ResNet101 and MNIST).}  
\label{fig:helpers}
\end{figure}

\section{Conclusion and Future Work}

In this work, we studied the problem of minimizing the makespan in SL for homogeneous and heterogeneous task scenarios, captured by the memory constraints in \SL\ and \genSL, respectively. Even for highly restricted instances, we proved that \SL\ is strongly \NP-hard  and cannot admit a polynomial-time (3/2)-approximation algorithm unless~$\P=\NP$. Nevertheless, we overcame this intractability by providing a polynomial-time 5-approximation algorithm for \SL. For \genSL, we established that merely determining whether there exists a feasible client-helper assignment is strongly \NP-hard, implying that it cannot admit a polynomial-time approximation algorithm for any approximation factor. Nonetheless, we designed a heuristic called EquiD based on our approximation algorithm, and showed through extensive experiments that it outperforms the other heuristics from the literature and provides near-optimal solutions in a fraction of the time it takes for a solver to find the optimal solution.

Future work would naturally involve including further decisions in the SL process such as deciding on the cut layers. However, as we have shown that highly restricted instances of our problems are already hard, most likely only heuristics would be attainable in this case. One of the most interesting future directions is to find the best approximation factor possible by a polynomial-time approximation algorithm for \SL. Since our lower bound holds for its restriction to $R~||~C_{\text{max}}$, it seems plausible that the additional features of \SL\ could be used to increase this bound.

%%% LAST PAGE FOR  REFERENCES
\clearpage
{
\bibliography{SL-bib}
\bibliographystyle{ieeetr}
}

\end{document}